\newtheorem{lemma}{Lemma}
\newtheorem{prop}{Proposition}
\newtheorem{remark}{Remark}
\newcommand{\K}{\mathcal{K}}
\newcommand{\PP}{\mathbb{P}}
\newcommand{\E}{\mathbb{E}}
\begin{document}
	
\title{On the Design of Magnetic Resonant Coupling for Wireless Power Transfer in Multi-Coil Networks}	
\author{\IEEEauthorblockN{Eleni Demarchou, \IEEEmembership{Member, IEEE}, Constantinos Psomas, \IEEEmembership{Senior Member, IEEE}, and\\ Ioannis Krikidis, \IEEEmembership{Fellow, IEEE}\vspace{-3mm}}
\thanks{E. Demarchou, C. Psomas, and I. Krikidis are with the IRIDA Research Centre for Communication Technologies, Department of Electrical and Computer Engineering, University of Cyprus, Cyprus (e-mail: \{edemar01, psomas, krikidis\}@ucy.ac.cy).

This work was co-funded by the European Regional Development Fund and the Republic of Cyprus through the Research and Innovation Foundation, under the projects INFRASTRUCTURES/1216/0017 (IRIDA) and POST-DOC/0916/0256 (IMPULSE). This work has also received funding from the European Research Council (ERC) under the European Union's Horizon 2020 research and innovation programme (Grant agreement No. 819819).}}
\maketitle
	
\begin{abstract}
Wireless power transfer (WPT) is a promising technology for powering up distributed devices in machine type networks. Over the last decade magnetic resonant coupling (MRC) received significant interest from the research community, since it is suitable for realizing mid-range WPT. In this paper, we investigate the performance of a single cell MRC-WPT network with multiple receivers, each equipped with an electromagnetic coil and a load. We first consider pre-adjusted loads for the receivers and by taking into account spatial randomness, we derive the harvesting outage probability of a receiver; for both the strong and loosely coupling regions. Then, we develop a non-cooperative game for a fixed receiver topology, in order to acquire the optimal load which maximizes each receiver's harvested power. Throughout our work, we obtain insights for key design parameters and present numerical results which validate our analysis. 
\end{abstract}
\begin{IEEEkeywords}
magnetic resonant coupling, wireless power transfer, spatial randomness, game theory.
\end{IEEEkeywords}

\section{Introduction}
The flexibility of energy harvesting brought by wireless power transfer (WPT) has received significant attention from both industry and academia, over the recent years, and is a promising solution for the forthcoming machine type networks \cite{Alouini6G}. Wireless chargers for electric toothbrushes and cellular phones constitute of the most well-known applications realizing near-field WPT. Such applications are based on inductive coupling, which is only efficient for short-range energy harvesting, i.e., at the order of centimeters \cite{Barman}. As a remedy to extend the distance limitations, the authors in \cite{MIT}, performed an experiment implementing magnetic resonant coupling (MRC). The MRC approach is based on the primary concept of inductive coupling with both, the transmitter and the receiver, resonating at the same frequency \cite{Barman}. As a result, a strong magnetic coupling occurs, and non-radiative power can be transferred efficiently at longer distances \cite{MIT}. This is realized by an RLC circuit placed close to each electromagnetic (EM) coil \cite{MIT}, or by a variable capacitor alignment embedded in the electric circuit of each node\cite{compensator}. Then, resonance is achieved by tuning the corresponding elements. The efficiency of the MRC approach has attracted significant interests in the research community, since it is applicable for employing mid-range WPT \cite{Barman}.

A circuit model is provided in \cite{IC}, in order to capture the resonance coupling systems and it is shown that high resonant coupling is key for the power transfer efficiency. Aiming to maximize the information capacity of MRC communications, the authors in \cite{LEE}, investigate both the strongly and loosely coupled regions and define the optimal parameters which provide high information capacity. A system supported by a relay-acting coil is investigated in \cite{misalignment}, where the effect of lateral and angular misalignment is captured. In \cite{Rui}, the authors consider a WPT system with a single transmitter and multiple receivers and characterize the “near-far” fairness issue, which occurs from the distance-dependent mutual inductance with the transmitter. 
%
Furthermore, in \cite{Rui2} the optimal magnetic beamforming is derived for a multiple-input single-output setup, in order to maximize the delivered power at the receiver. The authors in \cite{MIMO}, consider a system with multiple coils at the receiver and multiple single-coil transmitters, and jointly optimize the transmitter source currents and receive coil selection by taking into account a minimum harvested power constraint. Coil selection is also studied in \cite{MIMO2}, where power leakage to unintended receivers is considered and a simultaneous magnetic field focusing and null-steering optimization problem is solved. A multiple-input multiple-output setup is investigated in \cite{Xu}, where limited feedback information on the magnetic channel is considered, and a random magnetic beamforming is proposed to improve the received power. 

In contrast to the existing literature, this letter focuses on the gap associated with the MRC-WPT network design from a system level perspective. While in this context, the research background is wide in the area of far-field WPT \cite{kWPTRelays}, the mid-range WPT network design is still in its infancy. In this work, we provide preliminary analysis for a single cell MRC-WPT network consisting of single-coil nodes. Specifically, \begin{itemize}
\item We consider spatial randomness for the receivers coils and analyze the performance of the typical receiver. By taking into account minimum harvested power requirements, we derive a tight upper bound for the outage probability in the strong coupling region, and provide a closed-form expression for the loosely coupling region.
\item By considering a fixed set of coils' profiles, we develop a non-cooperative game, in order to acquire the optimal load for each receiver and maximize the harvested power subject to the other receivers' profile.
\item Throughout our analysis, we obtain insights for several parameters which are essential for the MRC-WPT network design. Finally, we present numerical results to validate our analysis and show the gains brought by optimally adjusting the loads at the receivers.  
\end{itemize}  

\section{System model}

Consider a mid-range WPT system consisting of a single transmitter and a set $\K$ of $K$ receivers. The transmitter is equipped with an EM coil of self-inductance $L$, an internal resistance $R$, a capacitor $C$, and is connected to a sinusoidal voltage source $v_T$ providing an output power denoted by $P$. The receivers are equipped with identical EM coils of self-inductance $l$, internal resistance $r$, and a capacitor $c$, with the $i$-th receiver connected with an adjustable load $x_i$ \cite{Rui}. The network operates at an angular frequency $\omega\triangleq\frac{1}{\sqrt{LC}}$, and the receivers harvest power through the MRC-WPT principle i.e., $\frac{1}{\sqrt{lc}}=\frac{1}{\sqrt{LC}}$. The equivalent circuit model of the network is depicted in Fig. \ref{sm}. 

By applying the Kirchhoff's law, the harvested power at the $i$-th receiver can be expressed as \cite{Rui}
\begin{align}
p_i=P \frac{\omega^2 M_i^2 x_i (r+x_i)^{-2}}{R+\omega^2 \sum_{k=1}^{K}M_k^2 (r+x_k)^{-1}},\label{obj}
\end{align}
where 
$M_i$ denotes the mutual inductance between the transmitter and the $i$-th receiver. Let $e$ denote a constant defined by the physical characteristics of the coils, then the mutual inductance can be approximated by \cite{LEE}
\begin{equation}
	M_i = e \frac{I_i}{d_i^3},
\end{equation}
where $d_i$ denotes the distance between the $i$-th receiver and the transmitter and $I_i$ captures the impact of angular alignment with the transmitter and is given by \cite{LEE}
\begin{equation}
I_i= 2\sin(\theta_t)\sin(\theta_i)+\cos(\theta_t)\cos(\theta_i),\label{I}
\end{equation}
 with $\theta_t$ and $\theta_i$ corresponding to the transmitter's and $i$-th receiver's angle, respectively, between the coil radial direction and the axis between the transmitter's and receiver's coil centers. 
 
 { In the following we investigate two performance metrics; the harvesting outage probability and the harvested power. The former, derived in Section \ref{Soutage}, serves as a network-based evaluation metric whereas the latter is examined in Section \ref{optloads} as a receiver-centric metric.}
\begin{figure}[t]\centering	
	\includegraphics[width=0.7\linewidth]{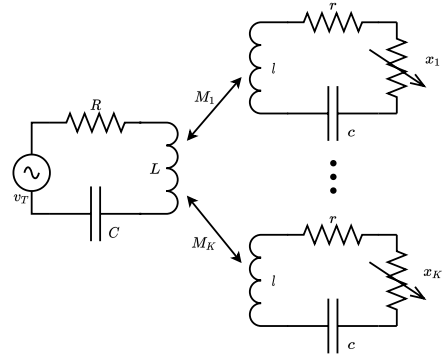}
	\vspace{-1mm}
	\caption{A single-cell MRC-WPT network with $K$ receivers.}\label{sm}
	\vspace{-2mm}
\end{figure}
\section{Harvesting Outage Probability Analysis}\label{Soutage}
We focus on the performance of a \textit{typical} receiver with fixed mutual inductance with the transmitter i.e., $M_0=e\frac{I_0}{d_0^3}$, where $I_0$ and $d_0$ denote the angular alignment and the distance between the typical receiver and the transmitter, respectively. { The other receivers are spatially distributed according to a homogeneous Poisson point process (PPP) $\Psi$ of density $\lambda$. We consider that the mutual inductance between the transmitter and any receiver located within a distance higher than $\rho$, is negligible. Therefore, we focus on a cell of radius $\rho$ centered at the transmitter \cite{kWPTRelays}.}
{{For the sake of simplicity, we consider that all the loads of the receivers are adjusted to $x_i\triangleq x$, while the impact of loads' diversity is considered in Section \ref{optloads}}}. {Furthermore, we take into account a minimum harvested power threshold $\tau$, which is required to satisfy the receiver's quality of service demands \cite{Rui}.} In the following we derive the outage probability of the typical receiver, which is defined as the probability that the harvested power is below the predefined threshold $\tau$ i.e., $\mathcal{P}_o(\tau)=\mathbb{P}\left[p_0<\tau\right]$, where $p_0$ is the harvested power at the typical receiver and is given by \eqref{obj}. 

To assist with our analysis, we let $S \triangleq \sum_{y_i \in \Psi} \frac{I_i^2}{d_i^{6}}$, where $y_i$ denotes the coordinates of the $i$-th receiver. Following the aforementioned, the harvested power at the typical receiver can be expressed by
\begin{equation}\label{u0}
p_0=P \frac{\omega^2 M_0^2 x (r+x)^{-2}}{R+\omega^2(r+x)^{-1}e^2\left(\frac{I_0^2}{d_0^6}+S\right)}.
\end{equation}
For the derivation of the outage probability we make use of the following lemma.
\begin{lemma}\label{lemma1}
	The moment generating function $\phi_S(t)$ of the random variable $S$ is evaluated by
	{\begin{equation}
	\phi_S(t)\approx \exp\left(\pi\lambda\left(\int_0^\rho\exp\left(tv^{-6}\right) v\,dv-\rho^2\right)\right).
 \end{equation}}
\end{lemma}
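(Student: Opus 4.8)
The plan is to evaluate $\phi_S(t)=\mathbb{E}[e^{tS}]$ by working directly with the Poisson statistics of $\Psi$. Since $S=\sum_{y_i\in\Psi} I_i^2/d_i^6$ is a sum over the points, the exponential factorizes into a product, $e^{tS}=\prod_{y_i\in\Psi} e^{t I_i^2/d_i^6}$, which is exactly the object handled by the probability generating functional (PGFL) of a homogeneous PPP. First I would recall that for a PPP of density $\lambda$ on the cell $\mathcal{D}$ (the disk of radius $\rho$ centered at the transmitter) and a measurable function $g$,
\begin{equation}
\mathbb{E}\!\left[\prod_{y_i\in\Psi} g(y_i)\right]=\exp\!\left(-\lambda\int_{\mathcal{D}}\bigl(1-g(y)\bigr)\,dy\right).\nonumber
\end{equation}
Taking $g(y)=e^{t I^2/|y|^6}$ collapses the random sum into a single deterministic integral, reducing the problem to evaluating the exponent $\lambda\int_{\mathcal{D}}\bigl(e^{t I^2/|y|^6}-1\bigr)\,dy$.

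The second step is to remove the dependence on the orientation term $I_i$. The angles $\theta_i$ enter only through the i.i.d. marks $I_i$, which are independent of the point locations, so I would take the expectation over the marks inside the integrand and then approximate the mark-averaged kernel by dropping the alignment factor, i.e. $\mathbb{E}_I\bigl[e^{t I^2/|y|^6}\bigr]\approx e^{t/|y|^6}$ (equivalently, using $\mathbb{E}[I^2]\approx 1$). This is precisely the step that makes the identity approximate rather than exact, and I expect it to be the main obstacle: one has to argue that averaging $I^2$ inside the exponential may be replaced by a representative value without materially affecting the transform, which is reasonable in the regime of interest but is not an equality.

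Finally, the residual integral is elementary. Passing to polar coordinates $y=(v,\varphi)$ with $dy=v\,dv\,d\varphi$ and integrating out the free azimuth $\varphi$ over $[0,2\pi]$ leaves the radial integral $\int_0^\rho\bigl(e^{t v^{-6}}-1\bigr)v\,dv$; separating the constant part via $\int_0^\rho v\,dv=\rho^2/2$ isolates the $-\rho^2$ contribution and, after collecting the azimuthal and intensity constants, reproduces the closed form stated in Lemma~\ref{lemma1}. I would also note that the PGFL reduction is valid in the regime $t\le 0$ relevant to the outage computation, since for $t>0$ the kernel $e^{t v^{-6}}$ is non-integrable near the origin and $\phi_S$ diverges; this is consistent with the transform being used effectively as a Laplace transform when bounding $\mathcal{P}_o(\tau)$.
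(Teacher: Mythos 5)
Your proposal is correct and takes essentially the same route as the paper's own proof: the probability generating functional of the PPP, replacement of the random alignment factor $I_i^2$ by a representative value $\approx 1$ (the paper substitutes $I_i \to \mathbb{E}[|I_i|] \approx 1$ via an elliptic-integral evaluation before applying the PGFL, while you average the marks inside the PGFL and then drop the factor — the same idea in a slightly different order), followed by polar integration. One remark: your computation, like the paper's appendix, yields the exponent $2\pi\lambda\int_0^\rho e^{t v^{-6}} v\,dv - \pi\lambda\rho^2$, which differs from the lemma's displayed formula by a factor of $2$ on the integral term; that mismatch is a typo in the lemma statement rather than a flaw in your argument.
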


\begin{proof}
	See Appendix \ref{prf_lemma1}.
\end{proof}
Provided with Lemma \ref{lemma1}, we evaluate the outage probability directly with the help of the Gil-Pelaez inversion theorem \cite{GP}
\begin{align}
\PP(Y < y) = \frac{1}{2} - \frac{1}{\pi} \int_0^\infty \frac{1}{t} \Im\{\phi_Y(\jmath t) \exp(-\jmath t y)\} dt,
\end{align}
where $\jmath=\sqrt{-1}$, $\Im\{X\}$ gives the imaginary part of $X$ and $\phi_Y(jt)$ corresponds to the characteristic function of the random variable $Y$. The outage probability is then given in the following proposition.
\begin{prop}\label{out}
	The outage probability of the typical receiver is given by
	\begin{align}\label{mu}
	\mathcal{P}_o(\tau) = \frac{1}{2} + \frac{1}{\pi} \int_0^\infty \frac{1}{t} \Im\left\{\phi_S(\jmath t) \exp(-\jmath t \Lambda(\tau))\right\} dt,
	\end{align}
	where
	\begin{align}
	\Lambda(\tau) = \frac{I_0^2}{d_0^6}\frac{P x (r + x)^{-1} - \tau}{\tau}-\frac{R (r + x)}{\omega^2e^2},
	\end{align}
	if $P\geq\frac{\tau\left(r+x\right)}{x}\left(\frac{R(r+x)}{\omega^2M_0^2}+1\right)$, otherwise $\mathcal{P}_o(\tau)=1$.
\end{prop}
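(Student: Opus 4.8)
The plan is to reduce the outage event $\{p_0 < \tau\}$ to a one-sided threshold event on the single aggregate random variable $S$, and then apply the Gil-Pelaez inversion theorem together with Lemma \ref{lemma1} to extract the closed-form integral in \eqref{mu}.

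First I would insert the expression \eqref{u0} for $p_0$ into the inequality $p_0 < \tau$. Since the denominator $R+\omega^2(r+x)^{-1}e^2\!\left(\frac{I_0^2}{d_0^6}+S\right)$ is strictly positive, the fraction can be cleared without reversing the inequality. Using the identity $M_0^2 = e^2 I_0^2/d_0^6$ and collecting terms, I would move everything except the $S$-term to the other side and divide through by the positive factor $\tau\,\omega^2 e^2 (r+x)^{-1}$. The three resulting terms should simplify to exactly $\{p_0 < \tau\} = \{S > \Lambda(\tau)\}$ with $\Lambda(\tau)$ as stated; this is the routine but bookkeeping-heavy core of the argument, and the only place where care is needed to match $\Lambda(\tau)$ coefficient by coefficient.

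With the event rewritten, I have $\mathcal{P}_o(\tau)=\mathbb{P}[S>\Lambda(\tau)]=1-\mathbb{P}[S<\Lambda(\tau)]$, where the passage to the complement is valid because the law of $S$ carries no atom at any $\Lambda(\tau)>0$ (conditioned on a nonempty point configuration $S$ is continuous). Applying Gil-Pelaez to $\mathbb{P}[S<\Lambda(\tau)]$ and substituting the characteristic function $\phi_S(\jmath t)$ from Lemma \ref{lemma1}, the complement turns $\tfrac12-\tfrac1\pi\int(\cdots)\,dt$ into $1-\bigl(\tfrac12-\tfrac1\pi\int(\cdots)\,dt\bigr)=\tfrac12+\tfrac1\pi\int(\cdots)\,dt$, which flips the sign of the integral term and recovers \eqref{mu} precisely. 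The step most prone to slips is this sign tracking when moving from the complementary CDF to the Gil-Pelaez CDF form.

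Finally I would establish the dichotomy. Because $S=\sum_{y_i\in\Psi} I_i^2/d_i^6$ is a sum of nonnegative summands, $S\ge 0$ almost surely; hence whenever $\Lambda(\tau)<0$ the event $\{S>\Lambda(\tau)\}$ is certain and $\mathcal{P}_o(\tau)=1$. A short rearrangement of $\Lambda(\tau)\ge 0$, again using $M_0^2=e^2 I_0^2/d_0^6$, yields the equivalent condition $P\ge \frac{\tau(r+x)}{x}\bigl(\frac{R(r+x)}{\omega^2 M_0^2}+1\bigr)$, which matches the threshold in the statement and completes the proof. I do not expect any genuine obstacle here; the argument is essentially a deterministic reduction followed by a direct invocation of the inversion theorem.
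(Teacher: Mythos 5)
Your proposal is correct and follows essentially the same route as the paper's (very terse) proof: solve $p_0<\tau$ for $S$ to obtain $\{S>\Lambda(\tau)\}$, apply the Gil-Pelaez inversion theorem to the complementary event (which produces the $\tfrac{1}{2}+\tfrac{1}{\pi}\int(\cdot)\,dt$ form), and use $S\geq 0$ almost surely to conclude $\mathcal{P}_o(\tau)=1$ when $\Lambda(\tau)<0$, with the stated power threshold being exactly the rearrangement of $\Lambda(\tau)\geq 0$ via $M_0^2=e^2 I_0^2/d_0^6$. Your added remarks on the absence of atoms of $S$ at $\Lambda(\tau)>0$ and on the sign bookkeeping are details the paper leaves implicit, but they introduce no deviation in method.
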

\begin{proof}
The result follows by solving the outage probability with respect to $S$ and by applying the Gil-Pelaez theorem \cite{GP}. Since $S>0$, it follows that $\mathcal{P}_o(\tau)=1$ when $\Lambda(\tau)<0$.
\end{proof}
For the special case $\lambda \to \infty$, the characteristic function above dominates the outage probability. Hence, since the imaginary part of a characteristic function simplifies to a sine function and as $\int_0^\infty \sin(at)/t = \pi/2$ for any $a \geq 0$, we have $\mathcal{P}_o (\tau)= 1$.  

We now turn our attention to the loosely coupled regime. That is, even though resonance is achieved, due to the path-loss attenuation or the angular misalignment, the mutual inductance with the transmitter is degraded such that $M_i\to0$, $\forall i \in \K$.  In this case, the effect of coupling is neglected at the transmitter's coil \cite[Sec. III]{LEE}, and the received power at the $i$-th receiver is given by \cite{Rui}
\begin{equation}
p_i=P\frac{\omega^2 M_i^2 x_i}{R(r+x_i)^2}. \label{s1}
\end{equation}
Evidently, the received power at each receiver does not depend on the other receivers' profile since $M_i\to0$, $\forall i \in \K$.

{In order to evaluate the average performance of the typical receiver over all the possible locations of the network, we assume that the receiver is uniformly distributed within the cell.} That is, $d_0$ is a random variable with probability density function (pdf) $f_{d_0}(v)=\frac{2}{\rho^2}$ \cite{kWPTRelays}. Then, the outage probability for this scenario is given in the following proposition.
\begin{prop}\label{prop1}
	The outage probability for a typical receiver uniformly distributed within a cell of radius $\rho$ is given by
	\begin{equation}\label{su}
	\mathcal{P}_o(\tau) = 1 - \frac{1}{\rho^2} \left(\frac{P \omega^2 e^2 I_0^2 x}{\tau R(r+x)^2}\right)^{1/3}.
	\end{equation}
\end{prop}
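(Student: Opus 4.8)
The plan is to turn the outage event into a pure threshold condition on the distance $d_0$ and then integrate the density of $d_0$ over the offending region, so no inversion machinery (unlike Proposition~\ref{out}) is needed here. I would first specialise the loosely coupled power \eqref{s1} to the typical receiver, substituting $M_0 = e\,I_0/d_0^3$ and the common load $x_i \triangleq x$, to obtain
\begin{equation}
p_0 = \frac{P\,\omega^2 e^2 I_0^2 x}{R(r+x)^2\, d_0^6}.\nonumber
\end{equation}
The key observation is that, for fixed coil parameters and angular alignment $I_0$, the harvested power $p_0$ depends on the random quantity $d_0$ \emph{only} through the strictly decreasing factor $d_0^{-6}$. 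Hence the outage event is monotone, and $\{p_0 < \tau\}$ is equivalent to $\{d_0 > A\}$ for the single threshold $A \triangleq \left(\frac{P\omega^2 e^2 I_0^2 x}{\tau R(r+x)^2}\right)^{1/6}$ obtained by solving $p_0 = \tau$ for $d_0$.

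Next I would compute $\mathcal{P}_o(\tau) = \PP[d_0 > A] = \int_A^{\rho} f_{d_0}(v)\,dv$ using the distance density of a point uniformly placed in the cell, namely $f_{d_0}(v) = \tfrac{2v}{\rho^2}$ on $[0,\rho]$ (this is the normalised form that makes the integral close; the factor $v$ is what converts the $d_0^6$ in the power law into the cube-root exponent of the final expression). The integration is immediate, $\int_A^{\rho} \tfrac{2v}{\rho^2}\,dv = 1 - A^2/\rho^2$, and since $A^2 = \left(\frac{P\omega^2 e^2 I_0^2 x}{\tau R(r+x)^2}\right)^{1/3}$, substituting back yields exactly \eqref{su}.

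There is no genuinely hard step here; the proof is essentially a monotone change of variables. The only points requiring care are (i) getting the direction of the inequality right after the monotone reduction, and (ii) the implicit assumption that the threshold lies inside the cell, i.e.\ $A \le \rho$. When $A > \rho$, the entire disc falls in outage and $\mathcal{P}_o(\tau) = 1$, so the closed form \eqref{su} is understood to hold on the feasible parameter range $A \le \rho$; this boundary behaviour is the loosely coupled analogue of the $\mathcal{P}_o(\tau)=1$ condition flagged in Proposition~\ref{out}. I would therefore state the result under $A \le \rho$ and note the trivial saturation otherwise.
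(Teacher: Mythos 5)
Your core derivation is exactly the paper's proof: reduce the outage event to the monotone threshold condition $d_0^6 > P\omega^2 e^2 I_0^2 x/(\tau R(r+x)^2)$, i.e.\ $d_0 > A$, and integrate the distance density $f_{d_0}(v)=2v/\rho^2$ over the offending region (the paper computes the CDF $F_{d_0}(x)=x^2/\rho^2$ and takes the complement, which is the same computation), so the formula \eqref{su} is obtained correctly.

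Your boundary discussion, however, is backwards, and this is a genuine error rather than a matter of presentation. Outage is the event $\{d_0 > A\}$: a receiver fails when it is \emph{farther} than $A$ from the transmitter. If $A > \rho$, then every point of the cell satisfies $d_0 \le \rho < A$, so \emph{no} receiver is in outage and $\mathcal{P}_o(\tau) = 0$, not $1$; a large $A$ means the power budget is so generous that coverage extends past the cell edge. This is precisely the content of the paper's \eqref{minP}, which identifies $P \ge \rho^6\tau R(r+x)^2/(\omega^2 e^2 I_0^2 x)$ --- equivalently $A \ge \rho$ --- as the condition under which $\mathcal{P}_o(\tau)=0$. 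Your claimed analogy with Proposition \ref{out} is inverted for the same reason: there the degenerate branch $\mathcal{P}_o(\tau)=1$ arises from \emph{insufficient} power ($\Lambda(\tau)<0$), whereas here the degenerate case $A>\rho$ arises from \emph{abundant} power. So if you state the result under the restriction $A \le \rho$, the saturation you should record for $A > \rho$ is $\mathcal{P}_o(\tau)=0$ (i.e., \eqref{su} clamped at zero when its right-hand side turns negative), not $\mathcal{P}_o(\tau)=1$.
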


\begin{proof}
	See Appendix \ref{prf_prop1}.
\end{proof}
It is easy to obtain from \eqref{su} the minimum transmitter's output power which is required to achieve $\mathcal{P}_o(\tau) = 0$ and is
\begin{align}
P \geq \frac{\rho^6 \tau R (r+x)^2}{\omega^2 e^2 I_0^2 x}.\label{minP}
\end{align}
Clearly, with a higher mutual inductance with the transmitter, less power is required to achieve $\mathcal{P}_o(\tau)=0$.

\section{A non-cooperative game - Nash equilibrium}\label{optloads}
In this section, we consider that the mutual inductance between the transmitter and the $i$-th receiver is fixed, where $i \in \K$, while the corresponding load $x_i$ is adjustable and varies within the interval $\left[x_l,x_u\right]$. { The $i$-th receiver can adjust its own load $x_i$ aiming to maximize its harvested power $p_i$ subject to the rest $x_k$ loads, $k \neq i$. For that purpose, we define the following non-cooperative strategic game $\mathcal{G}=\{\K, x_i, u_i \}$ \cite{gametheory}}.
\begin{itemize}
	\item {\it Players}: The receiver set $\K=\{1,\ldots,K \}$.
	\item {\it Actions}: Each receiver adjusts its load resistance $x_i$, with $x_l \leq x_i\leq x_u$, in order to maximize $p_i$. 
	\item {\it Utilities:} The utility function is the harvested power by each receiver i.e., $u_i(x_i,\pmb{x}_{-i})=p_i$, where $\pmb{x}_{-i}$ denotes the strategy profile of the rest receivers.
\end{itemize}
The solution to the above non-cooperative game (if it exists), is denoted by $x_i^*$ and is the Nash equilibrium (NE), which satisfies the condition $u_i(x_i^*,\pmb{x}_{-i}^*)\geq u_i(x_i,\pmb{x}_{-i}^*)$, $\forall\; x_i \neq x_i^*$. By assuming that the other players do not change their current strategies, the best response for the $i$-th player is a strategy update rule, where the receiver selects the strategy that maximizes its individual utility function. Specifically, given a current strategy profile $\pmb{x}=(x_i,\pmb{x}_{-i})$, the strategy update of the player $i$ is the solution to the following optimization problem \vspace{-1mm}
\begin{align}
\tilde{x}_i=\arg \max_{x_l\leq x_i \leq x_u} u_i({x_i,\pmb{x}_{-i}}). 
\end{align}
The solution to the above optimization problem is given in the following proposition.
\begin{prop}\label{propp1}
The optimal strategy update for the $i$-th player given the strategy profile $\pmb{x}_{-i}$,
 \begin{equation}
 \tilde{x}_i=\begin{cases}
 \xi(\beta,\gamma), & \text{if}\; \xi(\beta,\gamma)\in [x_l, x_u],\\
  x_l, & \text{if}\; \xi(\beta,\gamma)<x_l, \\
 x_u, & \text{if}\; \xi(\beta,\gamma)>x_u,
 \end{cases}
 \end{equation}
where $\xi(\beta,\gamma)=\sqrt{r\left(r+\frac{\beta}{\gamma}\right)}$, $\beta=\omega^2 M_i^2$ and $\gamma= R+\omega^2 \sum_{k\neq i} M_k^2 (r+x_k)^{-1}$.
\end{prop}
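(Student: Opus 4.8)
The plan is to treat the utility $u_i=p_i$ as a function of the single variable $x_i$ on the compact interval $[x_l,x_u]$, with the remaining loads frozen, locate its unique unconstrained maximizer via a first-order condition, and then project that maximizer onto the interval. The first step is bookkeeping: in the denominator of \eqref{obj}, I would split off the $i$-th term, $\sum_{k=1}^{K}M_k^2(r+x_k)^{-1}=M_i^2(r+x_i)^{-1}+\sum_{k\neq i}M_k^2(r+x_k)^{-1}$, so that, with $\beta=\omega^2 M_i^2$ and $\gamma=R+\omega^2\sum_{k\neq i}M_k^2(r+x_k)^{-1}$, the utility collapses to
\[
u_i=P\,\frac{\beta\,x_i\,(r+x_i)^{-2}}{\gamma+\beta(r+x_i)^{-1}}.
\]
The change of variable $u\triangleq r+x_i$ then clears the negative powers and yields the clean rational form $u_i=P\,\beta(u-r)/(\gamma u^2+\beta u)$, which is the object I would actually differentiate.

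Next I would carry out the optimization. Writing $f(u)=(u-r)/(\gamma u^2+\beta u)$, the quotient rule produces, after cancellation, the numerator $g(u)=-\gamma u^2+2\gamma r u+\beta r$ over the positive square $(\gamma u^2+\beta u)^2$, so $\mathrm{sign}(f'(u))=\mathrm{sign}(g(u))$. Setting $g(u)=0$ gives the quadratic $\gamma u^2-2\gamma r u-\beta r=0$ with roots $u=r\pm\sqrt{r(r+\beta/\gamma)}$. Since a load resistance must satisfy $x_i>0$, i.e. $u>r$, only the larger root is admissible, and back-substitution $x_i=u-r$ returns exactly
\[
x_i=\sqrt{r\!\left(r+\tfrac{\beta}{\gamma}\right)}=\xi(\beta,\gamma).
\]

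Finally I would establish that $\xi(\beta,\gamma)$ is a maximizer and dispose of the boundary cases. Because $g$ is a downward parabola in $u$ (leading coefficient $-\gamma<0$), it is positive on $\bigl(r,\xi(\beta,\gamma)+r\bigr)$ and negative beyond, so $f$ — hence $u_i$ — strictly increases and then strictly decreases in $x_i>0$; thus $u_i$ is unimodal on $x_i>0$ with the single interior maximizer $\xi(\beta,\gamma)$. Projecting onto $[x_l,x_u]$ is then immediate: if $\xi(\beta,\gamma)\in[x_l,x_u]$ the interior optimum is attained; if $\xi(\beta,\gamma)<x_l$ the utility is decreasing throughout $[x_l,x_u]$, giving best response $x_l$; and if $\xi(\beta,\gamma)>x_u$ it is increasing throughout, giving best response $x_u$. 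These are precisely the three cases claimed. The only real obstacle is the algebraic reduction that folds the denominator into the single constant $\gamma$ and linearizes the problem through $u=r+x_i$; once the utility is in the form $P\,\beta(u-r)/(\gamma u^2+\beta u)$, the unimodality argument and the case split follow with no further difficulty.
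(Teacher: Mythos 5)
Your proposal is correct and follows essentially the same route as the paper: split the $i$-th term out of the denominator, reduce the utility to a single-variable rational function, locate the unique critical point $\xi(\beta,\gamma)$ via the first-order condition, establish unimodality from the sign of the derivative's numerator, and project onto $[x_l,x_u]$. The only cosmetic difference is your substitution $u=r+x_i$; the paper differentiates directly in $x_i$ and obtains the numerator $\alpha(\beta r+\gamma r^2-\gamma x_i^2)$, which is exactly your $g(u)$ after the change of variable.
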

\begin{proof}
	See Appendix \ref{ap1}.
\end{proof}
An interesting observation from the above proposition is that the best response (as well as the NE) is independent of the transmitter's output power $P$.
 \begin{remark}
 	For the special case where $x_i=x$, $M_i=M$ $\forall i \in \K$, $R\rightarrow 0$ and $r\rightarrow 0$, we have $p_i=\frac{P}{K}$. That is, the transmitter's output power is distributed symmetrically among the receivers and becomes independent of the system parameters. Furthermore, it is worth noting that when $K\rightarrow \infty$, the harvested power asymptotically converges to zero i.e., $p_i\rightarrow 0$.  
 \end{remark}
 We now investigate the loosley coupling region. We first consider the case where the $i$-th receiver is loosely coupled i.e., $M_i\rightarrow 0$. The harvested power in this case, is similar to $\eqref{s1}$ where by taking into account the effect of $\gamma$, we have
 \begin{align}
 p_i=P \frac{\omega^2 M_i^2 x_i}{\gamma(r+x_i)^2}.
 \end{align}
Then, by following Proposition \ref{propp1}, if $r \in [x_l, x_u]$, the optimal strategy update is $\tilde{x}_i=r$. Therefore, the harvested power becomes equal to $p_i=P \frac{\omega^2 M_i^2}{4\gamma r}$. In a similar way, if $M_i\rightarrow 0$, $\forall\; i \in \K$, we have $\tilde{x}_i=r$ and $p_i=P \frac{\omega_0^2 M_i^2}{4R r}$, $\forall\; i \in \K$.
\section{Numerical Results}\label{numerical}
In this section, we present numerical results in order to evaluate the performance of the considered MRC-WPT network. The constant $e$ is evaluated by $e=\frac{\pi \mu_0 N A^2 n a^2}{4}$, where $\mu_0=4 \pi \times 10^{-7}$ H/m is the magnetic permeability of the air; $N=200$ and $n=10$ correspond to the transmitter's and receivers' coils turns, respectively with corresponding radius of $A=20$ cm and $a=5$ cm \cite{Rui}. Finally, the internal resistances of the transmitter's and receivers' coils are $R=1.3440\,\Omega$ and $r=0.0672\,\Omega$ \cite{Rui}. 
\begin{figure}[t]\centering
	\includegraphics[width=0.65\linewidth]{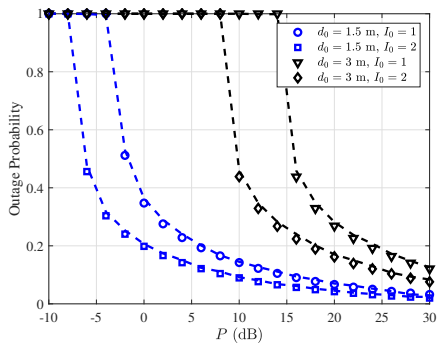}
	\vspace{-1mm}
	\caption{Outage probability in the strong coupling region; Markers and dashed lines correspond to simulations and analytical results, respectively.}\label{fig1}
	\vspace{-1mm}
\end{figure}

We first focus on performance of the typical receiver. Unless otherwise stated we consider $\rho=5$ m, $\lambda=0.1$ $\text{coils}/\text{m}^2 $, $x=2\,\Omega$ and $\tau=0.1$ W. In Fig. \ref{fig1}, we plot the outage probability with respect to the transmitter's output power for $I_0=\{1,2\}$ and $d_0=\{1.5,3\}$ m. Markers and dashed lines correspond to simulations and analytical results, respectively. {As can be seen, the proposed approximation in Proposition \ref{out}, provides a very tight upper bound for the outage probability.} Furthermore, we can see that in all cases, $\mathcal{P}_o(\tau)=1$ for $P<\frac{\tau\left(r+x\right)}{x}\left(\frac{R(r+x)}{\omega^2M_0^2}+1\right)$ (see Proposition \ref{out}), and as $P$ increases the outage probability decreases, as expected. Clearly, when the receiver is in the stronger coupling region i.e., $I_0=2$ and $d_0=1.5$ m, the lowest outage probability is achieved, as expected. When the receiver's coupling becomes weaker i.e., with lower mutual inductance e.g., with a higher $d_0$ or smaller $I_0$, the outage probability increases. Furthermore, we can see that increasing the distance from the transmitter has a significant impact on the outage probability since $M\propto d^{-3}$.

In Fig. \ref{fig2}, we consider the case where all the receivers are in the loosely coupling regime and plot the outage probability of the typical receiver for $\tau=0.1$ W. When $I_0=0.25$, we compare the outage probability by assuming a higher internal resistance for the transmitter i.e., $R=2.5\,\Omega$ which results in a worse performance for the receiver. This is expected since with higher $R$, less power is transferred to the receivers loads. For the case where $I_0=0.5$, we compare the outage probability for loads $x=\{r,1,2\}\,\Omega$. Clearly when $x=r$, the receiver achieves the lowest outage probability, since in the loosely coupling region the harvested power is maximized when $x=r$, as suggested in Section \ref{optloads}. Furthermore, from \eqref{minP}, it follows that in this case, with a transmitter's power $P\geq24.5847$ dB, we can achieve $\mathcal{P}_o(\tau)=0$ which can be validated from the figure.

\begin{figure}[t]\centering
	\includegraphics[width=0.65\linewidth]{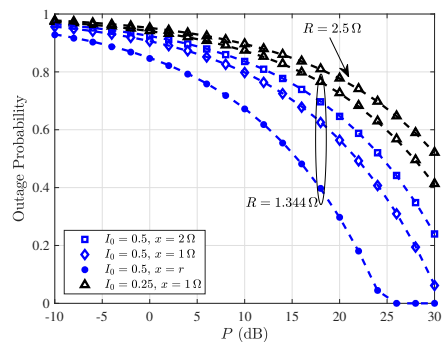}
	\vspace{-1mm}
	\caption{Outage probability in the loosely coupling region; Markers and dashed lines correspond to simulations and analytical results, respectively.}\label{fig2}
	\vspace{-1mm}
\end{figure}

We now consider a set of $K=4$ receivers with $M_1=-0.0921$ $\mu$H, $M_2=0.0402$ $\mu$H, $M_3=0.0370$ $\mu$H \cite{Rui} and $M_4=0.0245$ $\mu$H. We consider $P=10$ dB, $x_l=0.01\,\Omega$ and $x_u=5\,\Omega$ and obtain the optimal load for each receiver. That is, $x_1^*=0.1505\,\Omega$, $x_2^*=0.0796\,\Omega$, $x_3^*=0.0776\,\Omega$ and $x_4^*=0.0716\,\Omega$. We plot in Fig. \ref{game}, the harvested power achieved at each receiver and compare the cases where $x=r$ and $x=x_u$. We can see that for all the receivers, the highest harvested power is achieved with the optimal loads obtained through the non-cooperative game; the worse performance is achieved with the highest load at the receivers.
\begin{figure}[t]\centering
	\includegraphics[width=0.65\linewidth]{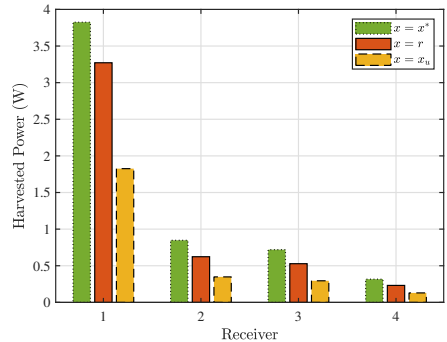}
	\vspace{-1mm}
	\caption{Harvested power at each receiver with different loads.}\label{game}
\vspace{-1mm}
\end{figure}

\section{Conclusions}
In this letter, we studied the performance of an MRC-WPT network with multiple receivers. By taking into account a minimum power threshold, we provided the outage probability in a tight approximation for the strong coupling region and in closed-form expression for the loosely coupling region, while considering spatial randomness. In order to maximize the harvested power at the receivers, we developed a non-cooperative strategic game through which the optimal load for each receiver is acquired. We presented numerical results which validated our analysis and obtained an insight overview of several system parameters which are essential for the practical applications of MRC-WPT.   

\appendix
\subsection{Proof of Lemma \ref{lemma1}}\label{prf_lemma1}
The moment generating function of $S$ is evaluated by
\begin{align}
\phi_S(t)& = \E_S\left[\exp(t S)\right]= \E_{\Psi}\left[\exp\left(t \sum_{y_i \in \Psi} I_i^2 d_i^{-6}\right)\right].
\end{align}
It is clear from \eqref{I} that $I_i \in \left[-2,2\right]$ while $\phi_S(t)$ depends on $I_i^2$. Therefore, for the sake of simplicity, we consider the substitution of $I_i$ by $\E\left[|I_i|\right]$, $\forall i \in \Psi$ and is given by
\begin{align}
\E\left[|I_i|\right]&=\E_{\theta_i}\left[\frac{1}{2\pi}\int_{0}^{2\pi}|2\sin(\theta_t)\sin(\theta_i)+\cos(\theta_t)\cos(\theta_i)|\,d\theta_t\right]\nonumber\\
&=\E_{\theta_i}\left[\frac{1}{2\pi}4 \sqrt{\cos^2(\theta_i) + 4 \sin^2(\theta_i)}\right]\nonumber\\
&=\frac{4}{\pi^2}\int_{0}^{\pi/2}\sqrt{1+3\sin^2(\theta_i)}\,d\theta_i\approx 1,
\end{align}
where the last integral constitutes of a complete elliptic integral of the second kind \cite[8.110]{GRAD} and is numerically evaluated. Then, with the approximation of $I_i\approx1$, $\phi_S(t)$ is evaluated by
{\begin{align}
\phi_S(t)& \approx\exp\left(2\pi\lambda\int_0^\rho \left(\exp\left(t v^{-6}\right)-1\right)v\,dv\right)\label{eq1}\\
&=\exp\left(2\pi\lambda\int_0^\rho \exp\left(t v^{-6}\right)v\,dv-\pi \lambda \rho^2\right),
\end{align}}where in \eqref{eq1} we make use of the probability generating functional of a PPP. The result then follows after some algebraic manipulations. 

\subsection{Proof of Proposition \ref{prop1}}\label{prf_prop1}
The outage probability of the typical receiver is evaluated as follows
\begin{align}
\mathcal{P}_o(\tau) &=\PP\left[d_0^6 > \frac{e^2I_0^2 P \omega^2 x }{\tau R (r+x)^2}\right]\label{eq2}.
\end{align}
By utilizing the pdf of $d_0$ i.e., $f_{d_0}(v)=\frac{2}{\rho^2}$, we can evaluate the cumulative distribution function by
\begin{equation}\label{CDF}
F_{d_0}(x) = \PP\left[d_0 < x\right] = \int_0^x \frac{2 v}{\rho^2} dv = \frac{x^2}{\rho^2},
\end{equation}
where $x\leq\rho$. The result then follows by applying \eqref{CDF} in \eqref{eq2}.
\subsection{Optimal strategy update} \label{ap1}
Let $\alpha=P \beta$, $\beta=\omega^2 M_i^2$ and $\gamma= R+\omega^2 \sum_{k\neq i} M_k^2 (r+x_k)^{-1}$, then we can express the harvested power at the $i$-th receiver by
\begin{equation}
f(x_i)=\frac{\alpha x_i}{\beta (r+x_i)+\gamma (r+x_i)^2}.
\end{equation}
Then, the derivative of $f(x_i)$ is given by
\begin{equation}
f'(x_i)=\frac{\alpha(\beta r +\gamma r^2 -\gamma x_i^2)}{(r+x)^2 (\beta +\gamma r+\gamma x_i)^2},
\end{equation}
which is monotonic increasing ($f'(x)>0$) in the interval $(0,\xi(\beta,\gamma))$ and monotonic decreasing ($f'(x)<0$) in the interval $(\xi(\beta, \gamma), \infty)$. Therefore, the function $f(x_i)$ has a global maximum at $f'(x_i)=0\Rightarrow x_i=\xi(\beta,\gamma)$. 

In the case where $f(x_i)$ is defined in a specific interval i.e., $\left[x_l, x_u\right]$, then $f(x_i)$ has a maximum at $\xi(\beta,\gamma)$ for $\xi(\beta,\gamma) \in \left[x_l, x_u\right]$, otherwise maximum is achieved at the boundary; at the point $x_l$ if $\xi(\beta,\gamma)<x_l$ and at $x_u$ if   $\xi(\beta,\gamma)>x_u$.

{\it Existence of the NE:} The strategy space of each player $x_i \in [x_l, x_u]$ is nonempty, convex, and compact subset of Euclidean space. In addition, the utility function $u_i(\cdot)$ is continuous in $x_i$ and from the aforementioned has a global maximum at its relative maximum point. 

{\it Uniqueness of the NE:} We prove that the function $f(x_i)$ is a standard function and therefore satisfies the positivity, monotonicity, and the extendibility. 
\begin{itemize}
\item The function $f(x_i)$ is a rational function of two positive polynomials (since $x_i>0$) and therefore $f(x_i)>0, \forall\; x_i$.
\item The function $f(x_i)$ is monotonic increasing in the interval $(0, \xi(\beta, \gamma))$ and monotonic decreasing in $(\xi(\beta, \gamma), \infty)$; therefore satisfies the requirement of monotonicity.
\item For extendibility, we need to prove that $cf(x_i)-f(cx_i)>0$ for $c>1$. That is, $cf(x_i)-f(cx_i)>0\Rightarrow \beta>-\gamma [2r+x_i(1+c)]$ which holds $\forall\; x_i$. 
\end{itemize}

\end{document}